\newcolumntype{L}[1]{>{\raggedright\arraybackslash}p{#1}}
\newcolumntype{C}[1]{>{\centering\arraybackslash}p{#1}}
\newcolumntype{R}[1]{>{\raggedleft\arraybackslash}p{#1}}
\newcommand{\norm}[1]{\left\lVert#1\right\rVert}
\newcommand{\abs}[1]{\left\lvert#1\right\rvert}
\DeclareMathAlphabet{\mathcalON}{OT1}{pzc}{m}{n}
\newcommand \s {{\mathrm s}}
\renewcommand \L {{\mathrm L}}
\newcommand \C {{\mathrm C}}
\newcommand \dd {{\mathrm d}}
\newcommand \ddt {\frac{\dd}{\dd t}}
\DeclareBoldMathCommand \cX{{\cal X}}
\DeclareBoldMathCommand \cA{{\cal A}}
\DeclareBoldMathCommand \cB{{\cal B}}
\newcommand \cP {\mathbf{p}}
\DeclareBoldMathCommand \cQ{{\cal Q}}
\DeclareBoldMathCommand \cC{{\cal C}}
\newcommand \cW {\mathbf{w}}
\DeclareBoldMathCommand \cI{{\cal I}}
\DeclareMathOperator{\e}{\mathrm{e}}
\newcommand \Tlim{\mathcalON{T}}
\newcommand \bfw {\mathbf{w}}
\newcommand \bfc {\mathbf{c}}
\newcommand \bfx {\mathbf{x}}
\newcommand \bfA {\mathbf{A}}
\newcommand \bfB {\mathbf{B}}
\newcommand \bfxhat {\mathbf{\widehat{x}}}
\newcommand \bfchat {\mathbf{\widehat{c}}}
\newcommand \chat {{\widehat{c}}}
\newcommand \xhat {{\widehat{x}}}
\newcommand \OmegaHat {\widehat{\Omega}}
\newcommand \Np {N_{\mathrm p}}
\newcommand \Ns {N_{\mathrm s}}
\newcommand \Nsp {N_{\mathrm s}(\Np+1)}
\newcommand \Ts {T_{\mathrm s}}
\newcommand \fs {f_\mathrm{s}}
\newcommand \iL {i_\mathrm{L}}
\newcommand \vC {v_{\mathrm C}}
\newcommand \vi {v_{\mathrm i}}
\newcommand \Ltwo {{\mathrm{L}^2}}
\newcommand \rma {\mathrm{a}}
\newcommand \rmb {\mathrm{b}}
\newtheorem{theorem}{Theorem}
\newtheorem{remark}{Remark}
\begin{document}
\title{Efficient simulation of DC-DC switch-mode power converters by multirate partial differential equations}
\author{Andreas Pels, Johan Gyselinck, Ruth V. Sabariego, Sebastian Sch\"ops
\thanks{A. Pels and S. Sch\"ops are with the Graduate School of Computational Engineering, Technische Universität Darmstadt, Dolivostra{\ss}e 15, 64293 Darmstadt, Germany, and with the Institut für Theorie Elektromagnetischer Felder, Technische Universität Darmstadt, Schlo{\ss}gartenstra{\ss}e 8, 64289 Darmstadt, Germany.}%
\thanks{J. Gyselinck is with the BEAMS Department, Universit\'e libre de Bruxelles, Avenue Franklin Roosevelt 50, 1050 Bruxelles, Belgium.}%
\thanks{R. V. Sabariego is with the Department of Electrical Engineering, KU Leuven, EnergyVille, Kasteelpark Arenberg 10, 3001 Leuven, Belgium.}}%

\maketitle

\begin{abstract}
  In this paper, Multirate Partial Differential Equations (MPDEs) are used for the efficient simulation of problems with 2-level pulsed excitations as they often occur in power electronics, e.g., DC-DC switch-mode converters. The differential equations describing the problem are reformulated as MPDEs which are solved by a Galerkin approach and time discretization. For the solution expansion two types of basis functions are proposed, namely classical Finite Element (FE) nodal functions and the recently introduced excitation-specific pulse width modulation (PWM) basis functions. The new method is applied to the example of a buck converter. Convergence, accuracy of the solution and computational efficiency of the method are numerically analyzed.
\end{abstract}

\begin{IEEEkeywords}
  Finite element methods; Numerical analysis; Partial differential equations; Linear circuits; DC-DC power conversion
\end{IEEEkeywords}

\section{Introduction}
Multirate behaviour can be observed in a number of technical applications. In high-frequency electrical circuit simulation, e.g., \cite{Brachtendorf_1996aa,Roychowdhury_2001aa, Mei_2005aa}, the solution often consists of widely separated frequencies with slow and fast varying components. Furthermore, a division of the circuit into subcircuits whose state variables are either latent or active is often possible, especially in highly integrated circuits with many electrical elements \cite{Gunther_1993aa}. The same holds for field-circuit coupled simulations describing the same physical phenomenon, e.g. an electrical circuit coupled to a magnetoquasistatic field model of an electrical machine. In coupled multiphysical simulations, different physical phenomena exhibit different characteristic time constants (e.g. electro-thermal problems) and thus also lead to different rates of variation in the unknowns \cite{Schops_2010aa}.

The solution of the abovementioned problems by conventional time discretization is inefficient as it enforces a step size to resolve the dynamics of the most active components of the system. Thus, the latent parts of the system are resolved with a much smaller time step size than necessary. This results in long simulation intervals and high computational effort. To efficiently solve these problems, various multirate methods have been developed.

Problems described by ordinary differential and differential algebraic equations (ODEs and DAEs) can be split into subsystems \cite{Gunther_1993aa, Striebel_2009aa, Gear_1984aa,Kvaerno_1999aa}, i.e., into several systems of equations describing the latent and active components, respectively. The subsystems are coupled, e.g., by extrapolation and interpolation of the state variables and resolved by different time step sizes and/or methods. A method of this kind for the simulation of power electronics has been proposed by Pekarek et al. \cite{Pekarek_2004aa}. The coupling variables are synchronized in certain intervals. Between the synchronization time instants, the solution of the slow subsystem within the fast subsystem is calculated using a predictor and interpolation. Within the slow subsystem the solution of the fast subsystem is calculated by averaging.

Another recent concept to deal with multirate phenomena is the reformulation of the ODEs or DAEs describing the problem into multirate partial differential equations (MPDEs) \cite{Brachtendorf_1996aa, Roychowdhury_2001aa}. The concept of the MPDEs allows to split the solution into components associated with different explicitly stated time scales $t_1,t_2,\dots,t_m$. MPDEs have already been successfully applied in high-frequency circuit simulation using different solution approaches, e.g., multi-tone harmonic balance \cite{Brachtendorf_1996aa}, multivariate finite difference time domain, hierarchical shooting \cite{Roychowdhury_2001aa} or combinations of different time stepping methods \cite{Mei_2005aa}.

In this paper we focus on the efficient simulation of problems which are excited by periodic 2-level pulsed (control) signals, as is often the case in power electronics, e.g., in DC-DC switch-mode power converters \cite{Mohan_2003aa}.
The system of differential equations describing the application is reformulated into a system of MPDEs. The MPDEs are solved by a combination of a Ritz-Galerkin approach and conventional adaptive time discretization. For the solution expansion of the unknowns, two types of basis functions are proposed: 1. the standard nodal FE basis functions; 2. the excitation-specific pulse width modulation (PWM) basis functions, introduced in \cite{Gyselinck_2013ab}. The latter are designed to represent the ripple component in the output of switch-mode power converters. These are, for the first time, interpreted in the framework of MPDEs and their approximation properties are mathematically analyzed. Since the concept of MDPEs allows to explicitly split the solution into components of different rates and solve them with different methods, it is possible to efficiently simulate not only the steady-state (as done in \cite{Gyselinck_2013ab}), but also the transient behaviour of an application.  
The MPDE approach is validated in the example of a buck converter \cite{Gyselinck_2013ab} in continuous conduction mode, as depicted in Fig. \ref{fig:buckConvSimp}. Its solution, shown in Fig. \ref{fig:solutionBuckConv500Hz}, consists of a fast periodic ripple component and a slowly varying envelope.

The paper is structured as follows. Section~\ref{sec:IntMPDE} introduces the concept of MPDEs as described in the literature and establishes a link between the original system and the MPDEs. Section~\ref{sec:solMPDE} is devoted to the solution of the MPDEs using a combination of a Ritz-Galerkin approach and conventional time discretization. In Section~\ref{sec:BFs} the two different types of basis functions for the solution expansion of the unknowns are presented. Finally in Section~\ref{sec:Numerics} the method is numerically validated on the simplified buck converter and convergence, accuracy and computational efficiency are analyzed. Section~\ref{sec:Conclusion} concludes the work by briefly summarizing the proposed approach and the main results.

\section{Introduction to Multirate Partial Differential Equations}
\label{sec:IntMPDE}
In the following the proposed method is developed starting from a general linear circuit model. Let the vector of $\Ns$ unknown state variables consisting of node voltages and branch currents be given as
\begin{equation}
  {\bfx}(t) =  
  \left[\begin{array}{c} x_1(t) \\ x_2(t) \\ \vdots  \\  x_{\Ns}(t)   \end{array}\right] \, .
\end{equation}
Modified nodal analysis \cite{Ho_1975aa} can be used to determine the system of $\Ns$ first-order linear DAEs or ODEs governing the circuit.
This leads to the initial value problem (IVP)
\begin{equation}
  \begin{aligned}
    \bfA \ddt \bfx(t)+\bfB \bfx(t) &= \bfc(t), \\ 
    \bfx(0)&=\bfx_0,\\
    t \in \Omega&=[0,\Tlim],
  \end{aligned}
  \label{equ:DAEOriginal}
\end{equation}
where $\bfA,\bfB \in \mathbb{R}^{\Ns \times \Ns}$ are matrices, $\bfc(t) \in \mathbb{R}^{\Ns}$ is the vector of excitations, $\bfx_0$ is the vector of initial values and $\Tlim$ determines the simulation interval. 

For $\bfx(t)\in \mathrm{C}^1$, i.e., if $\bfx(t)$ is continuously differentiable, \eqref{equ:DAEOriginal} can be written equivalently as system of MPDEs \cite{Brachtendorf_1996aa, Roychowdhury_2001aa}, introducing $M$ different time scales $t_1,t_2,\dots,t_m$
\begin{equation}
  \begin{aligned}
    \bfA \, &\left(\frac{\partial \bfxhat}{\partial t_1} + \frac{\partial \bfxhat}{\partial t_2} + \dots + \frac{\partial \bfxhat}{\partial t_m}\right) + \bfB \,\bfxhat = \bfchat \, ,\\
    &\hspace{0.3em}(t_1,t_2,\dots,t_m)\in\OmegaHat=[0,\Tlim_1]\times[0,\Tlim_2]\times \dots \times [0,\Tlim_m], \\
  \end{aligned}
  \label{equ:MPDEOriginal}
\end{equation}
where $\bfxhat=\bfxhat(t_1, \dots, t_m)$, $\bfchat=\bfchat(t_1,\dots, t_m)$ are the multivariate forms of $\bfx(t)$, $\bfc(t)$, respectively, and $\Tlim_1, \dots, \Tlim_m$ determine the simulation domains. The vector of state variables $\bfxhat$ is given by
\begin{equation}
  \bfxhat(t_1,\dots,t_m) =  
  \left[\begin{array}{cc} \hat{x}_1(t_1,\dots,t_m) \\ \hat{x}_2(t_1,\dots,t_m) \\ \vdots  \\  \hat{x}_{\Ns}(t_1,\dots,t_m)   \end{array}\right] \, .
\end{equation} 
In the following, a relation between the solution and excitation of \eqref{equ:DAEOriginal} and \eqref{equ:MPDEOriginal} is established, which was first introduced by Brachtendorf et al. \cite{Brachtendorf_1996aa}. They developed a so-called multi-tone harmonic balance method using MPDEs to efficiently simulate high-frequency circuits with more than one fundamental frequency.

Let $\bfxhat(t_1,\dots,t_m) \in \mathrm{C}^1$ be a solution of the MPDEs \eqref{equ:MPDEOriginal} and $\bfchat(t_1,\dots,t_m)$ the corresponding excitation. Then the solution and excitation of the DAEs or ODEs \eqref{equ:DAEOriginal} and MPDEs \eqref{equ:MPDEOriginal} are related by $\bfx(t)=\bfxhat(t+\alpha_1,\dots,t+\alpha_m)$ and $\bfc(t)=\bfchat(t+\alpha_1,\dots,t+\alpha_m)$, respectively, for any fixed $\alpha_1,\dots,\alpha_m \in \mathbb{R}$ \cite{Brachtendorf_1996aa,Knorr_2007aa}.

To proof this statement, the chain rule of differentiation is applied to \eqref{equ:DAEOriginal}, which yields \cite{Brachtendorf_1996aa,Knorr_2007aa}
  \begin{equation}
    \begin{aligned} 
      \left. \bfA \, \ddt \bfx(t) \right|_{t=t_0} &= \left. \bfA \, \ddt \bfxhat(t+\alpha_1,\dots,t+\alpha_m)\right|_{t=t_0}\\
      &= \bfA \, \bigg[\frac{\partial \bfxhat(t_1,\dots,t_m)}{\partial t_1}+ \dots \\
      &\hspace{1.5em}+ \left.\frac{\partial \bfxhat(t_1,\dots,t_m)}{\partial t_m}\bigg] \right|_{t_1=t_0+\alpha_1,\dots,t_m=t_0+\alpha_m}  \\
      &\hspace{-0.05em}\overset{\eqref{equ:MPDEOriginal}}{=} \, \bfchat(t_0+\alpha_1,\dots,t_0+\alpha_m)\\
      &\hspace{1.2em}-\bfB \, \bfxhat(t_0+\alpha_1,\dots, t_0+\alpha_m)\\
      &=\bfc(t_0)-\bfB \, \bfx(t_0).
    \end{aligned}
  \end{equation}
  
Thus if a solution of the MPDEs~\eqref{equ:MPDEOriginal} can be found for a multivariate right-hand side fulfilling $\bfc(t)=\bfchat(t+\alpha_1,\dots,t+\alpha_m)$, the solution of the DAEs or ODEs~\eqref{equ:DAEOriginal} can be extracted from the multivariate solution using $\bfx(t)=\bfxhat(t+\alpha_1,\dots,t+\alpha_m)$.
To solve the MPDEs \eqref{equ:MPDEOriginal}, initial and boundary conditions have to be imposed. 
As only IVPs are considered, whose solution can be separated into periodic and non-periodic parts, the setting of envelope-modulated solutions \cite{Roychowdhury_2001aa} is appropriate.
We therefore define initial and boundary conditions to the MPDEs \eqref{equ:MPDEOriginal} as 
\begin{equation}
  \begin{aligned}
    \bfxhat(t_1,t_2+T_2,\dots,t_m+T_m)&=\bfxhat(t_1,t_2,\dots,t_m)\\
    \bfxhat(0,t_2,\dots,t_m)&=h(t_2,\dots,t_m),
  \end{aligned}
\end{equation}
where $h(t_2,\dots,t_m)$ is a function specifying the initial conditions and $T_2,\dots,T_m$ are time intervals of periodicity. 

For the sake of simplicity, hereafter, we restrict $m$ to two time scales ($m=2$) leading to the mixed initial boundary value problem
\begin{equation}
  \begin{aligned}
    \bfA \, \left(\frac{\partial \bfxhat(t_1, t_2)}{\partial t_1} + \frac{\partial \bfxhat(t_1, t_2)}{\partial t_2}\right) + \bfB \bfxhat(t_1, t_2) &= \bfchat(t_1, t_2) \, ,\\
    \bfxhat(t_1,t_2+T_2)&=\bfxhat(t_1,t_2), \\
    \bfxhat(0,t_2+T_2)&=\bfxhat_0(t_2).
  \end{aligned}
  \label{equ:MPDEOriginalTwoTS} 
\end{equation}

In the following section, we will apply the MPDE framework to problems with discontinuous right-hand sides. Existence and uniqueness can still be assured by the Carath\'eodory conditions. The interested reader is referred to, e.g., \cite{Flippov_1988aa}. However, in the case of the pulsed excitations introduced in the next section, a piecewise analysis is possible and a detailed discussion is not needed.

\section{Solution of the Multirate Partial Differential Equations}
\label{sec:solMPDE}
In this section we focus on the solution of the MPDEs for applications with PWM (pulsed) excitation. The switching cycle $\Ts$ and duty cycle $D$ are assumed to be constant. 
We propose the following procedure for solving: 1.) a Ritz-Galerkin approach is applied to one dimension of the MPDEs \eqref{equ:MPDEOriginalTwoTS}; 2.) the remaining linear system of ODEs or DAEs is solved with conventional time discretization. 

\subsection{Solution expansion by basis functions}
In a first step the multivariate solution $\bfxhat(t_1,t_2)$ is expanded into a finite set of basis functions and coefficients. It reads
\begin{equation}
  \xhat_j^h(t_1, t_2) := \sum_{k=0}^{\Np} p_k(t_2)\, w_{j,k}(t_1) \, ,
  \label{equ:solExp}
\end{equation}
where $\xhat_j^h$, $1\leq j \leq \Ns$ is the $j$-th approximated state variable, $p_k(t_2)$, $0 \leq k \leq \Np$ are periodic basis functions and $w_{j,k}(t_1)$ are coefficients. The superscript $h$ in $\xhat_j^h(t_1, t_2)$ denotes that it is an approximation to $\xhat_j(t_1, t_2)$. By defining the expansion as above we associate the slowly varying envelope with the time scale $t_1$, which will be therefore referred to as the slow time scale, and the fast periodically varying ripples with the time scale $t_2$, which will be referred to as fast time scale.
The basis functions are periodic $p_k(t_2)=p_k(t_2+\Ts)$ with switching cycle $\Ts$, which can be accounted for by introducing the relative time $\tau \in [0,1]$
\begin{equation}
  \tau \ = \ \frac{t_2}{T_\s} \ \text{modulo} \ 1 \, .
\end{equation}
The switching cycle $\Ts$ is related to the switching frequency by $\Ts=\frac{1}{\fs}$. For simplicity the basis functions will in the following be expressed as functions of the relative time~$p_k(\tau)$.

Inserting the solution expansion into the partial derivatives from \eqref{equ:MPDEOriginalTwoTS} yields
\begin{equation}
  \frac{\partial \xhat_j^h}{\partial t_1} \, = \sum_{k=0}^{\Np}  \Big(p_k(\tau)\, \frac{\dd w_{j,k}}{\dd t_1} 
  \Big) \, ,
  \label{equ:partDeriv_t1}
\end{equation}
\begin{equation}
  \frac{\partial \xhat_j^h}{\partial t_2} \, = \, \sum_{k=0}^{\Np}  \Big( 
  \frac{\dd p_k}{\dd \tau} \, \frac{\dd \tau}{ \dd t_2} \, w_{j,k}(t_1)\Big)
  \label{equ:partDeriv_t2}
\end{equation}
with
\begin{equation}
  \frac{\dd \tau}{\dd t_2} =  \frac{1}{ T_\s} = f_\s \, .
\end{equation}
The solution expansion in matrix form is
\begin{equation}
  \xhat_j^h(t_1,t_2) = \cP ^\top\!(\tau) \,\cW_j(t_1)  \, ,
  \label{equ:expMat}
\end{equation}
where $\cP$ and $\cW_j$ are column vectors of length $\Np+1$
\begin{equation}
  \cP(\tau) =  
  \left[\begin{array}{c} p_0 \\ p_1(\tau) \\ p_2(\tau) \\ \vdots \\  p_{N_{\mathrm p}}(\tau)   \end{array}\right] \ , \ \
  \quad
  \cW_j(t) \ = \ 
  \left[\begin{array}{c} w_{j,0}(t) \\ w_{j,1}(t) \\ w_{j,2}(t) \\ \vdots \\  w_{j,\Np}(t)   \end{array}\right]
  \, .
\end{equation}
The sum of the partial derivatives \eqref{equ:partDeriv_t1} and \eqref{equ:partDeriv_t2} can finally be written as 
\begin{equation}
  \frac{\partial \xhat_j^h}{\partial t_1} + \frac{\partial \xhat_j^h}{\partial t_2}\, = \cP^\top\!(\tau) \, \frac{\dd \cW_j}{\dd t_1} \, + \, f_\s \, \frac{\dd \cP^\top}{\dd\tau} \, \cW_j(t_1).
  \label{equ:sumPartDeriv}
\end{equation}

\subsection{Galerkin approach}
The Ritz-Galerkin approach is applied to the MPDEs with respect to the fast time scale $t_2$ in the interval~$[0,\Ts]$
\begin{equation}
  \begin{aligned}
    \int\limits_{0}^{T_\s} \Bigg( &{\bfA} \, \left(\frac{\partial \bfxhat^h}{\partial t_1} + \frac{\partial \bfxhat^h}{\partial t_2}\right) + \bfB \, \bfxhat^h(t_1,t_2) \\
    -&\bfchat(t_1,t_2) \Bigg) \, p_l(\tau(t_2)) \, \dd t_2 \ =             \ 0 \, , \, \forall l=0,\dots,\Np,
  \end{aligned}
  \label{equ:gal}
\end{equation}
i.e., the MPDEs are weighted by the same basis functions used for the solution expansion. 

Let the matrices $\cI$ and $\cQ$ be given as 
\begin{equation}
  \cI  \ = \ \Ts \int\limits_0^1  \cP(\tau) \, \cP^\top \!(\tau)  \,\dd \tau \, , \quad \cQ \ = \ - 
  \int\limits_0^1  \frac{\dd \cP}{\dd \tau } \, \cP^\top(\tau)  \,  \dd \tau  \, .
\end{equation}
Inserting the relation \eqref{equ:sumPartDeriv} into \eqref{equ:gal} leads to
\begin{equation}
  \cA\, \frac{\dd  \cW}{\dd t_1} + \cB \, \cW(t_1) \ = \ \cC(t_1) \, ,
  \label{equ:ReducedMPDESystem}
\end{equation}
where 
\begin{equation}
  \cW(t_1) = 
  \left[\begin{array}{cccc}
    \cW_{1}(t_1)    \\ 
    \cW_{2}(t_1)     \\ 
    \vdots   \\
    \cW_{\Ns}(t_1)
  \end{array}\right] \, 
\end{equation}
is the unknown vector of $\Nsp$ coefficients and $\cA, \cB \in \mathbb{R}^{\Nsp\times\Nsp}$ and $\cC\in \mathbb{R}^{\Nsp}$ are, using the Kronecker product, given by
\begin{align}
  \cA&=\bfA\otimes\cI,\\
  \cB&=\bfB\otimes\cI+\bfA\otimes\cQ,\\
  \cC(t_1)&=\int\limits_{0}^{T_\s}
  \left[\begin{array}{cccc}
    \chat_{1}(t_1,t_2)  \,\cP(\tau(t_2))  \\ 
    \vdots   \\
    \chat_{\Ns}(t_1,t_2)  \,\cP(\tau(t_2))  \\ 
  \end{array}\right]
  \dd t_2 \, .
  \label{equ:integralC}
\end{align}
Note that the sparsity pattern of the matrices $\cI$ and $\cQ$ depends on the choice of the basis functions. 

\subsection{Time discretization}
The equations \eqref{equ:ReducedMPDESystem} are now formulated only in $t_1$. According to \eqref{equ:integralC} their right-hand side naturally depends on the right-hand side of the MPDEs, which only needs to satisfy the relation $\bfc(t)=\bfchat(t,t)$ according to Section \ref{sec:IntMPDE}. As a result infinitely many choices for $\bfchat(t_1, t_2)$ are possible. However to minimize the dynamic of the system \eqref{equ:ReducedMPDESystem} and thus maximizing the efficiency of the approach, it is reasonable to head for a constant right-hand side. As $\bfc(t)$ is periodic with switching cycle $\Ts$ for the considered problems, we choose $\bfchat(t_1,t_2)=\bfc(t_2)$. Inserting this into \eqref{equ:integralC}, the time scales $t_1$ and $t_2$ vanish which leads to
\begin{equation}
  \cC(t_1)=\mathit{constant}\,. 
\end{equation}
Note that the system \eqref{equ:ReducedMPDESystem} is $\Np+1$ times larger than the original one \eqref{equ:DAEOriginal}.

\section{Choice of basis functions for solution expansion}
\label{sec:BFs}
We propose the use of standard FE nodal functions as in classical finite element methods (FEM) or the PWM basis functions introduced in \cite{Gyselinck_2013ab}. 
\subsection{Finite element nodal basis functions}
To start with the FE nodal functions of first order, let us introduce a division of the relative time interval $[0,1]$ into elements, such that $0 < \tau_1 < \tau_2 < \dots < \tau_{\Np} < 1$, where the $\tau_k$ are the nodes defining the elements. 
The nodal basis functions are piecewise linear functions defined as
\begin{equation}
  p_k(\tau)=\left\{
  \begin{array}{ll}
    \frac{\tau-\tau_{k-1}}{\tau_k-\tau_{k-1}} & \text{for }\tau \in (\tau_{k-1}, \tau_k] \\
    \vspace{0.1em} \\
    \frac{\tau-\tau_{k+1}}{\tau_k-\tau_{k+1}} & \text{for }\tau \in (\tau_k, \tau_{k+1}) \\
    \vspace{0.1em} \\
    0 & \text{otherwise}
  \end{array}
  \right. .
\end{equation}
To enforce periodicity on the interval $[0,1]$ we set the basis functions at the boundary, i.e., $p_1(\tau)$ and $p_{\Np}(\tau)$, to zero
\begin{align}
  p_1(\tau)&=
  \begin{array}{ll}
    0 & \text{for } \tau \in [0,1] \\
  \end{array}
  \\
  p_{\Np}(\tau)&=
  \begin{array}{ll}
    0 & \text{for }\tau \in [0,1] \\
  \end{array}.
\end{align}
To resolve the envelope, we introduce an additional constant basis function
\begin{equation}
  p_0(\tau)=1 \quad\, \text{for } \tau \in [0,1],
\end{equation}
The FE nodal basis as defined above is depicted in Fig.~\ref{fig:feBasis}.
As the FE nodal functions offer local support, except $p_0(\tau)$, the matrices $\cI, \cQ$ are sparsely populated matrices. Due to the constant basis function $p_0(\tau)$ supporting the entire relative time interval $[0,1]$, the matrices are not purely banded matrices as in classical FE methods.

Instead of setting the boundary functions to zero and defining an additional constant basis function, it is also possible to enforce periodic boundary conditions on the set of standard FE nodal functions in the final system of equations.
\begin{figure}
    \centering
    \includegraphics{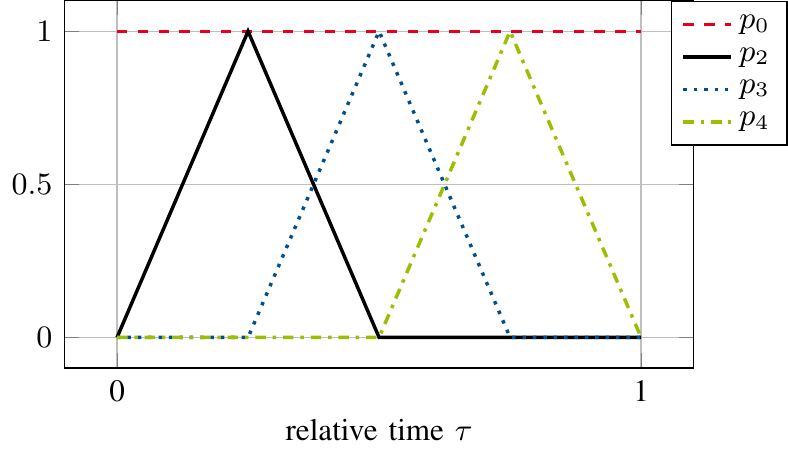}
    \caption{FE nodal basis functions $p_k(\tau)$, $k\in\{0,2,3,4\}$.}
    \label{fig:feBasis}
\end{figure}

\begin{figure}
    \centering
    \includegraphics{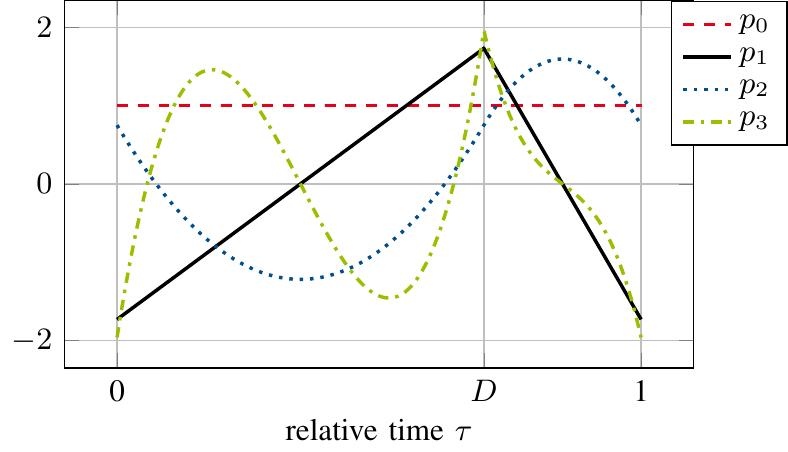}
    \caption{PWM basis functions $p_k(\tau)$, $k\in \{0,1,2,3\}$.}
    \label{fig:pwmBasis}
\end{figure}

\subsection{PWM basis functions}
A problem-specific choice of basis functions in case of a-priori known duty cycle is the PWM basis functions, which were developed in \cite{Gyselinck_2013ab}. The a-priori knowledge enables us to build the basis functions such that they mimic the shape of the ripple components in the solution by construction. The zero-th basis function is $p_0(\tau)=1$ which resolves the envelope as in the case of nodal basis functions. 
The PWM basis is iteratively built starting from the normalized, zero average, piecewise linear basis function $p_1(\tau)$ defined as \cite{Gyselinck_2013ab}
\begin{equation}
  \renewcommand*{\arraystretch}{1.3}
  \displaystyle
  p_1(\tau) \ = \ \left\{  
  \begin{array}{ll} 
    \sqrt{3} \ \frac{2\tau-D}{D} \quad & \mathrm{if} \ \ \ 0 \leq \tau \leq D \\ 
    \sqrt{3} \ \frac{1+D-2\tau}{1-D}  \quad & \mathrm{if} \  \ \ D \leq \tau \leq 1    \end{array}
  \right. .
  \label{equ:PWMbasis_p1}
\end{equation}
The higher-order basis functions $p_k(\tau)$, $2\leq k \leq \Np$ are obtained recursively by integrating the basis functions of lower order $p_{k-1}(\tau)$ ensuring $C^0$-continuity
\begin{equation}
  p_k^\star(\tau) = \int_D^\tau p_{k-1}(\tau') \,\dd\tau'  \, .
\end{equation}
This extended set of basis functions is successively orthonormalized, starting from $k=2$, by orthogonalizing
\begin{equation}
  \overline{p}_k(\tau) = p_k^\star(\tau) - \sum_{l=0}^{k-1} p_l(\tau) \, \int\limits_0^1 p_l(\tau) p_k^\star(\tau) \dd \tau  \, ,
\end{equation}
and normalizing
\begin{equation}
  p_k(\tau)= \frac{\overline{p}_k(\tau)}{\sqrt{\int\limits_0^1 \overline{p}_k(\tau) \overline{p}_k(\tau) \dd \tau}},
  \label{equ:PWMbasis_normalizing}
\end{equation}
which corresponds to a Gram-Schmidt orthonormalization \cite{Trefethen_1997aa}. Note that it is possible to calculate the PWM basis functions analytically. 
The basis functions of order up to 3 are depicted in Fig. \ref{fig:pwmBasis}. 

Opposed to the FE nodal functions, the PWM basis functions are global polynomials on the relative time interval $[0,1]$ as in spectral methods and offer the same accuracy with less degrees of freedom compared to the nodal basis functions \cite{Canuto_2006aa}. 
Due to the orthonormality of the PWM basis functions the matrix $\cI$ is the identity matrix. The matrix $\cQ$ is dense, however only 25\% are non-zero elements.

The approximation properties of these specific basis functions have been studied up to now only numerically \cite{Gyselinck_2013ab}. The basis functions are by construction restricted to represent piecewise exponential solutions. Their properties are studied analytically in the Appendix for a duty cycle of $D=0.5$.

\section{Numerical results}
\label{sec:Numerics}
In this section the method is numerically validated. Computational efficiency, accuracy and convergence results are presented. All calculations have been performed in GNU Octave \cite{Eaton_2015aa}. For solving equation \eqref{equ:ReducedMPDESystem}, an implicit Runge-Kutta method of order 5 with 6 stages is used. 
For step size prediction the estimated error is measured in the infinity norm instead of the 2-norm as originally proposed in \cite{Hairer_1996aa}, p. 124, i.e., 
\begin{equation}
  \norm{\mathit{err}} = \max\limits_{i=1,\dots,N}\abs{\frac{\mathit{err}_i}{\mathit{sc}_i}},
\end{equation}
where $N$ is the dimension of the equation system and $\mathit{err}_i$ is the estimated error of the $i$-th solution component in each step. The quantity $\mathit{sc}_i$ depends on the relative and absolute tolerance. For more information the reader is referred to \cite{Hairer_1996aa}.
The absolute tolerance is fixed at $\mathit{abstol}=10^{-10}$ so that the error estimation is controlled by the relative tolerance $\mathit{reltol}$. The solver supports dense output which is used in reconstructing the MPDE solution.

\subsection{Test case}
\label{sec:testCase}
The test case is a buck converter circuit \cite{Gyselinck_2013ab} as depicted in Fig. \ref{fig:buckConvComplete}. The buck converter consists of a DC voltage source $V_\mathrm{i}$, a switch (e.g. an IGBT), a diode, an inductor (consisting of inductance $L$ and resistance $R_\mathrm{L}$) and a capacitor (capacitance $C$). At the output a load resistance $R$ is connected. The switch is controlled by a 2-level pulsed signal, which closes and opens the switch at switching frequency $\fs$ and with a duty cycle $D$. 

\begin{figure}
  \centering
  \includegraphics{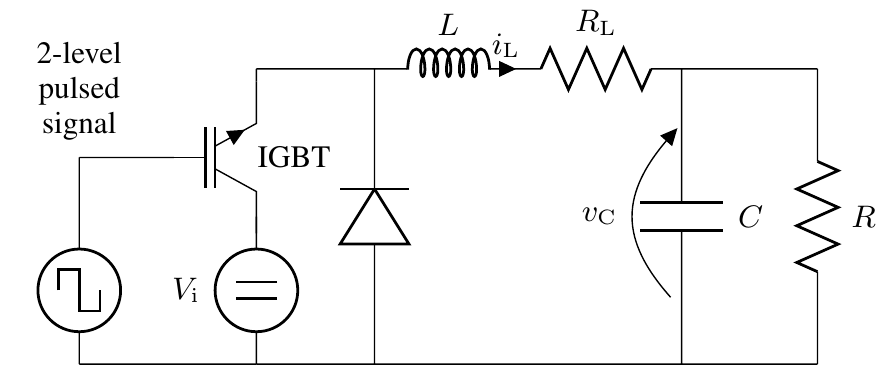}
  \caption{Circuit of a step-down buck converter.}
  \label{fig:buckConvComplete}
\end{figure}

Assuming continuous conduction mode ($\iL>0$), an ideal switch, and an ideal diode, the buck converter can be simplified as depicted in Fig. \ref{fig:buckConvSimp}\cite{Mohan_2003aa}. The switch and diode have been removed and the voltage source has been replaced by a pulsed voltage source $\vi(t)$, which output voltage alternates between $v_{\mathrm{i},\mathrm{off}}=0\,\mathrm{V}$ and $v_{\mathrm{i},\mathrm{on}}=V_\mathrm{i}$, i.e. 
\begin{equation}
  v_\mathrm{i}(t)= 
  \left\{\begin{array}{ll}
    V_\mathrm{i} & \text{for } \tau(t) \in [0,D]  \\
    0 & \text{otherwise}
  \end{array}\right..
  \label{equ:excitationBuckConv}
\end{equation}
\begin{figure}
  \centering
  \includegraphics{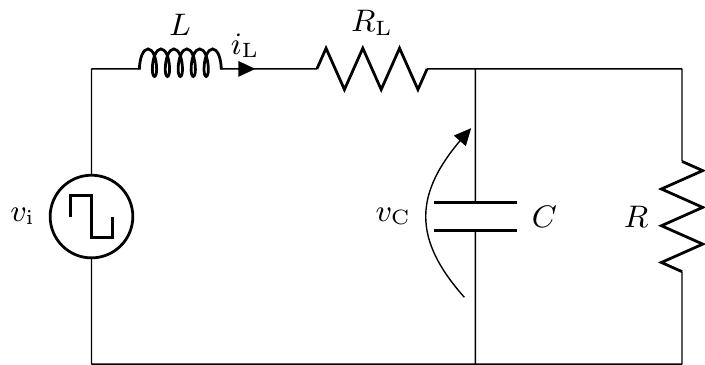}
  \caption{Simplified circuit of the buck converter in continuous conduction mode, with ideal switch and ideal diode.}
  \label{fig:buckConvSimp}
\end{figure}%
The circuit can be described by two state variables, namely the current through the coil $\iL(t)$ and the voltage across the capacitor $\vC(t)$, which is also the output voltage of the buck converter. Using Kirchhoff's circuit laws leads to the first-order linear ODEs
\begin{equation}
  {\left[\begin{array}{cc} L & 0  \\ 0 & C  \end{array}\right]}_{}
  \frac{\dd}{\dd t}{\left[\begin{array}{cc} i_\L \\ v_\C \end{array}\right]}_{}
  + {\left[\begin{array}{cc} R_\L & 1  \\ -1 & 1/R  \end{array}\right]}_{}  
  \!
  {\left[\begin{array}{cc} i_\L \\ v_\C \end{array}\right]}_{}
  \,=\,
  { 
    \left[\begin{array}{cc} v_{\mathrm i}(t) \\ 0  \end{array}\right]}_{} \, ,
  \label{equ:ode_buck}
\end{equation}
where the following parameter values are chosen:
\begin{itemize}
  \item $V_{\mathrm i} =100\,$V;
  \item $f_\s=500$\,Hz;
  \item $D=0.7$;
  \item $L=1\,$mH, $R_{\mathrm L}=10$\,m$\Omega$;
  \item $C=100$\,$\mu$F;
  \item $R=0.8\,\Omega$.
\end{itemize}
The initial conditions are set to $\vC(0)=0$ and $\iL(0)=0$.

As reference solution a closed-form analytic solution of the buck converter ODE \eqref{equ:ode_buck} is used. 
Figs. \ref{fig:solutionBuckConv500Hz} and \ref{fig:solutionBuckConv5000Hz} show the voltage at the capacitor and current through the coil of the buck converter for $\fs=500\,$Hz and $\fs=5000\,$Hz, respectively.
The solution consists of a slowly varying envelope and ripple components which are periodic. Increasing the switching frequency, the magnitude of the ripples decreases.

\begin{figure}
  \centering
  \includegraphics{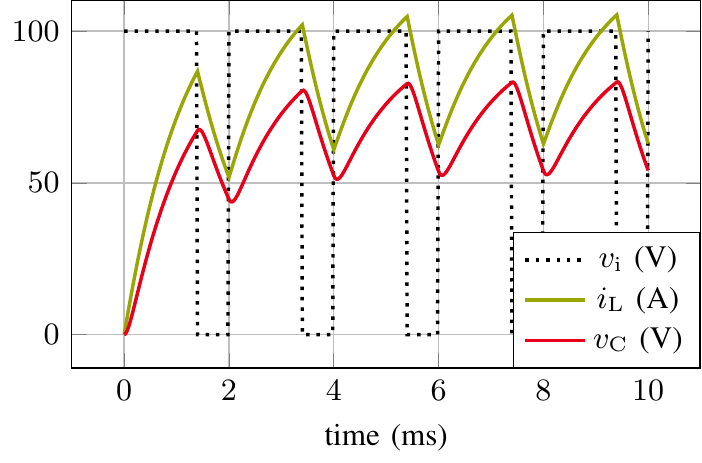}
  \caption{Solution of the buck converter for switching frequency $\fs=500\,$Hz~/~switching cycle $\Ts=2\,$ms and fixed duty cycle $D=0.7$.}
  \label{fig:solutionBuckConv500Hz}
\end{figure}
\begin{figure}
  \centering
  \includegraphics{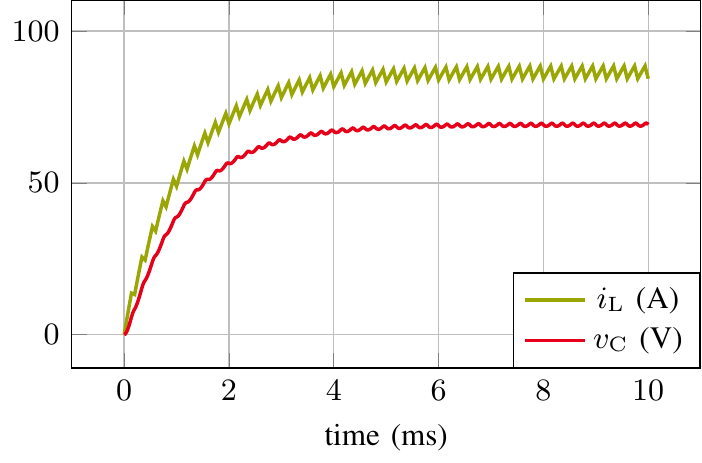}
  \caption{Solution of the buck converter for switching frequency $\fs=5000\,$Hz~/~switching cycle $\Ts=0.2\,$ms and fixed duty cycle $D=0.7$.}
  \label{fig:solutionBuckConv5000Hz}
\end{figure}

\subsection{Multirate solution}
To obtain the multirate solution of the buck converter circuit the MPDE approach as described before is applied. For the FE nodal functions equidistant spacing between the nodes dividing the relative time interval $[0,1]$ into elements is used. The number of basis functions is always chosen such that the jump of the excitation as defined in \eqref{equ:excitationBuckConv} occurs at a time instant which coincides with a node. Thereby the $C^0$ continuity in the solution coincides exactly with a node and is properly represented. For a duty cycle of $D=0.7$ this corresponds to $\Np \in \{11,21,31,41,...\}$. For the PWM basis functions no special care is needed to choose $\Np$ as they take the duty cycle $D$ into account by construction.

The equation system \eqref{equ:ReducedMPDESystem} is solved for the vector of coefficients $\bfw(t_1)$. To find the initial values $\bfw(0)$, the steady-state solution of the system is calculated
\begin{equation}
  \cB\bfw(0)=\cC(0).
\end{equation}
The coefficients $w_{j,0}$ corresponding to the constant basis function $p_0(t_2)$ are set as such that the solution satisfies the initial condition $v_\mathrm{C}(0)=0$ and $i_\mathrm{L}(0)=0$. 

The coefficients for 12 basis functions (11 FE nodal functions + 1 constant function), i.e., $\Np=11$, are exemplary depicted in Fig. \ref{fig:MPDEcoefficients} for the capacitor voltage after solving. All coefficients except $w_{j,0}$ stay constant during the simulation time, i.e. the coefficients controlling the shape of the ripples do not change. 

\begin{figure}[h!]
  \centering
  \includegraphics{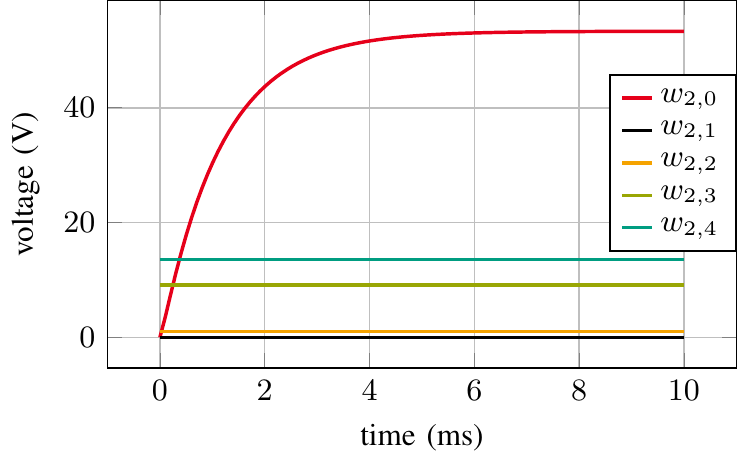}
  \caption{First five coefficients of the solution expansion of the capacitor voltage versus the time, calculated by solving the ODE~\eqref{equ:ReducedMPDESystem} for the buck converter using FE nodal functions ($\Np=11$).}
  \label{fig:MPDEcoefficients}
\end{figure}
\begin{figure}[h!]
  \centering
  \includegraphics{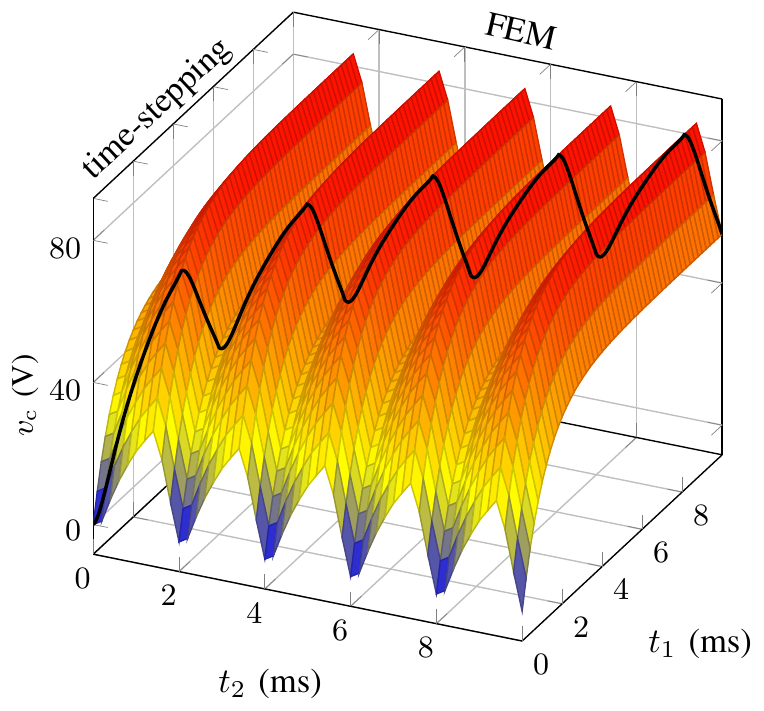}
  \caption{Multivariate solution $\bfxhat_2(t_1,t_2)$ (PWM basis functions) of the buck converter at $f_\mathrm{s}=500\,$Hz. Solution of original ODE marked in black.}
  \label{fig:MPDEHatFunc3D}
\end{figure}

The multivariate solution is reconstructed using the solution expansion~\eqref{equ:solExp}. As the solver often uses less time steps than for the original equations \eqref{equ:ode_buck}, it is taken advantage of dense output to extract a reasonably fine sampled solution. Fig.~\ref{fig:MPDEHatFunc3D} shows the result $\bfxhat(t_1,t_2)$ in a 3D plot. Along the time axis $t_1$ the slow dynamic resolved by time discretization can be observed while along the time axis $t_2$ the high dynamic resolved by the Galerkin approach is visible. The solution of the original equations \eqref{equ:ode_buck} is marked as black line and can be extracted using $\bfx(t)=\bfxhat(t,t)$ according to Section \ref{sec:IntMPDE}.

\subsection{Convergence}
To compare the two types of basis functions used for the solution expansion, the convergence of the solution with respect to $\Np$ and the tolerance of the solver is examined. We consider the simulation time interval $\Omega=[0,10]\,$ms. As reference solution for the buck converter, a closed-form analytic solution is calculated. 
The following analysis is restricted to the output voltage of the buck converter, i.e., the voltage at the capacitor. The convergence behaviour of the current through the inductor is similar.
Let $t\in\Omega$ and define the relative $\Ltwo$-error of the solution by  
\begin{equation}
  \epsilon(\mathit{reltol}, n) = \frac{||v_\mathrm{C,ref}(t)-v_\mathrm{C}^h(\mathit{reltol},n,t)||_{\Ltwo(\Omega)}}{||v_\mathrm{C,ref}(t)||_{\Ltwo(\Omega)}}
  \label{equ:l2erroreps}
\end{equation}
where $v_\mathrm{C}^h(\mathit{reltol},n,t)$ is the voltage at the capacitor calculated by the MPDE approach for different relative tolerance, number of basis functions and time instants and $v_\mathrm{C,ref}(t)$ is the respective reference solution. 
The $\Ltwo$-norm is approximated by numerical quadrature using the mid-point rule. In the following $\epsilon(\mathit{reltol}, n)$ will simply be referred to as error.

The error is evaluated at a fixed number of 500 samples per period $\Ts$. For this, again, the dense output feature of the solver is used.

Fig. \ref{fig:convergence} shows the convergence of the error $\epsilon(\mathit{reltol},n)$ using nodal basis functions (nodal BFs) with $h$-refinement and the PWM basis functions (PWM BFs) with $p$-refinement for a fixed relative tolerance of $\mathit{reltol}=10^{-6}$ for the time stepper. It's tolerance $\mathit{reltol}$ determines a limit for the accuracy of the solution. 
To ensure that the employed tolerance is small enough, the error $\epsilon$ is compared for $\mathit{reltol}=10^{-6}$ and $\mathit{reltol}=10^{-8}$. The absolute difference between the errors for a maximum number of basis functions, $\Np=12$ for the PWM basis, and $\Np=131$ for the FE basis, is several orders of magnitude smaller than the obtained error $\epsilon$. A relative tolerance of $\mathit{reltol}=10^{-6}$ is therefore adequate for all calculations.

According to Fig. \ref{fig:convergence}, the solution using PWM basis functions converges significantly faster than the solution using nodal basis functions with respect to the number of basis functions $\Np$.

\begin{figure}[t]
  \centering
  \includegraphics{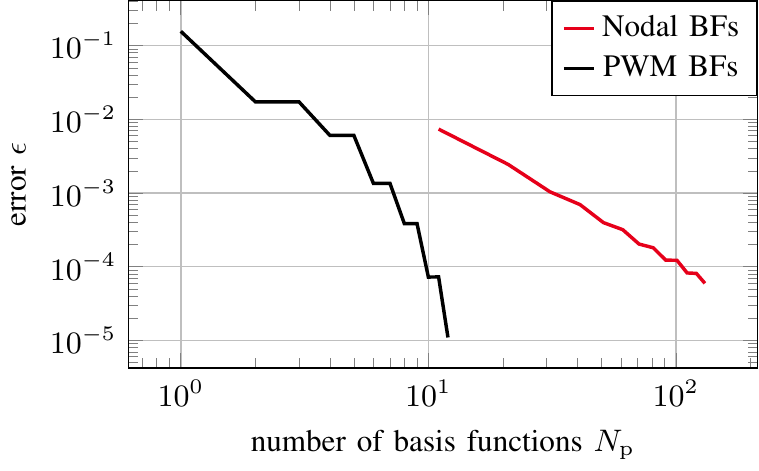}
  \caption{Error $\epsilon$ versus the number of basis functions $\Np$ for nodal basis functions ($h$-refinement) and PWM basis functions ($p$-refinement). The method converges with both types of basis functions. PWM basis functions show higher convergence rate.}
  \label{fig:convergence}
\end{figure}

\subsection{Computational efficiency}
To validate the efficiency of the method, the MPDE approach with nodal and PWM basis functions is compared to classical time discretization of the original ODEs \eqref{equ:ode_buck}. For the time discretization the accuracy is controlled by varying the relative tolerance of the solver while for the MPDE approach we fix the relative tolerance (at $\mathit{reltol}=10^{-6}$), and vary the number of basis functions $\Np$ to achieve a certain accuracy. 
Fig.~\ref{fig:efficiency} shows that the efficiency in terms of time for solving the differential equation systems of the MPDE approach depends on the choice and number of basis functions. While the PWM basis functions yield excellent efficiency, the MPDE approach using nodal functions becomes inferior than time discretization for about $\Np=71$. To better understand this effect, two additional quantities are examined. 

Fig.~\ref{fig:numFunEval} shows the error versus number of function evaluations. For the time discretization this number increases to reach higher accuracy as more time steps are necessary. For the MPDE approach, due to the slow dynamics of the equation system \eqref{equ:ReducedMPDESystem}, much less time steps and thus less function evaluations are needed. For higher accuracy (i.e. increasing $\Np$) the number of function evaluations even decreases. This effect results from adding additional basis functions by which there is more a-priori information on the solution already taken into account. The envelope stored in the zero-th coefficient $w_{j,0}$ including its initial value to ensure the initial conditions of the buck converter therefore varies with different $\Np$ and the ODE solver needs less time steps and thus less function evaluations.
In Fig.~\ref{fig:averTimePerFunEval} the error versus the average time per function evaluation is shown. In this plot the effect of larger equation system in the MPDE approach becomes visible. The average time increases dramatically for the MPDE approach with nodal basis functions as the number of basis functions $\Np \in \{11,\dots,131\}$ is large while for the PWM basis functions the effect is much smaller due to smaller $\Np \in \{1,\dots,12\}$. For conventional time discretization the average time per function evaluation is constant as the size of the equation system does not change and therefore the computational effort per step is constant. The effects visible in Figs.~\ref{fig:numFunEval}~and~\ref{fig:averTimePerFunEval} determine the overall efficiency depicted in Fig.~\ref{fig:efficiency}. In conclusion, for the FE nodal functions this means that the effect of increasing size of equation systems and therefore more effort per step begins to outweigh the advantage of less required time steps for $\Np=71$ and larger. 

The reconstruction of the solution using the solution expansion \eqref{equ:solExp} is not taken into account in the above efficiency measurements. The time for evaluation depends mainly on the number of samples at which the solution is reconstructed. If the number of samples per period is known, the evaluation of the basis functions can be done a-priori. As a result, the reconstruction of the solution is cheap. In the case of 500 samples per period it takes considerably less than $1\,$ms and can therefore be neglected.

Note that the speedup of the MPDE approach compared to time discretization can be expected to increase if larger time intervals are considered or higher switching frequencies $f_\mathrm{s}$ are used. The higher the frequency, the more ripples have to be resolved. Time discretization therefore needs more and more time steps in the same time interval while for the MPDE approach the number of time steps do not change as the periodically varying ripples are resolved by the Galerkin approach. The same happens for increasing time intervals and fixed switching frequency.

\begin{figure}
  \centering
  \includegraphics{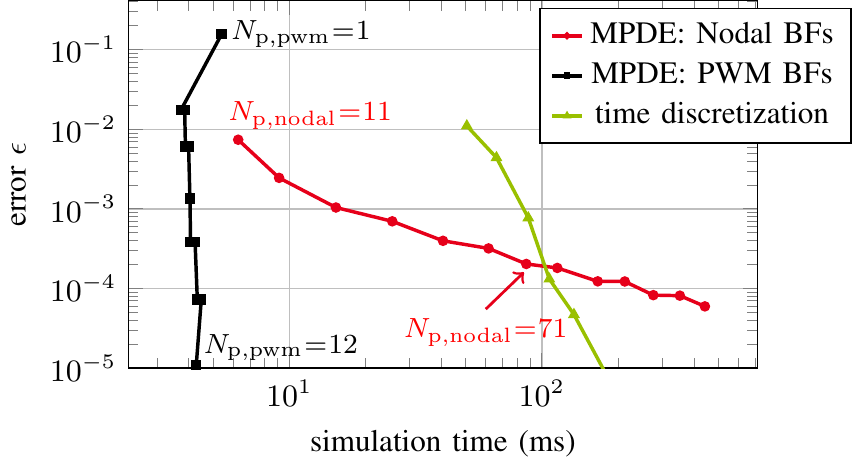}
  \caption{Error $\epsilon$ versus the simulation time in terms of solving the differential equation systems. The MPDE approach with nodal and PWM basis functions is faster than conventional time discretization for a small number of basis functions yielding the same accuracy.}
  \label{fig:efficiency}
\end{figure}
\begin{figure}
  \centering
  \includegraphics{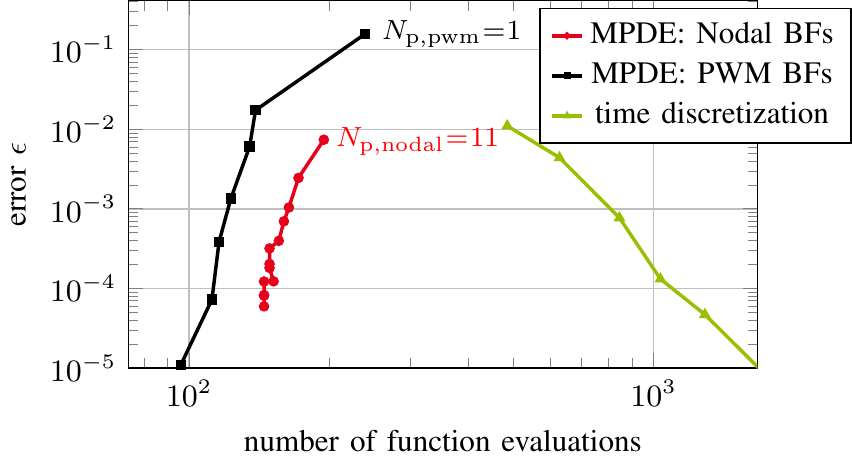}
  \caption{Error $\epsilon$ versus the number of function evaluations. For the MPDE approach the number of function evaluations is significantly smaller than for conventional time discretization to obtain the same accuracy.}
  \label{fig:numFunEval}
\end{figure}
\begin{figure}
  \centering
  \includegraphics{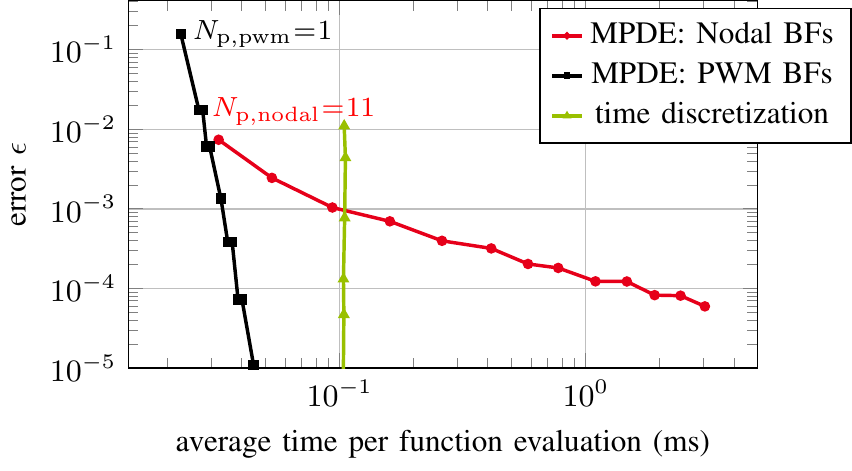}
  \caption{Error $\epsilon$ versus the average time per function evaluations. While for conventional time discretization the average time per function evaluation is constant, it increases for the MPDE approach due to the larger equation systems.}
  \label{fig:averTimePerFunEval}
\end{figure}

\section{Conclusion}
\label{sec:Conclusion}
An efficient and accurate approach to simulate PWM driven applications with constant switching and duty cycle has been presented. The linear circuit model of the power converter is first reformulated as MPDEs. A Galerkin approach and time discretization are used to solve the MPDEs. By this the fast periodically varying components of the solution are taken into account by the basis functions and the time discretization only resolves the dynamics of the envelope. This leads to a reduced number of time steps. For the solution expansion two types of basis functions have been proposed, namely FE nodal functions and PWM basis functions. The MPDE approach has been validated on the example of a simplified buck converter. The convergence of the solution in terms of solver tolerance and number of basis functions has been examined. The solution using PWM basis functions converges much faster than when using FE nodal functions. The computational efficiency of the method strongly depends on the choice and number of basis functions. By using the Galerkin approach, the size of the resulting equation system is determined by how many basis functions are used. To solve the final equation system a much smaller number of time steps is necessary however with the drawback of more time spent in each step due to the larger equation systems. A tradeoff between accuracy and speedup is therefore necessary. This becomes particularly visible for the FE nodal functions. When a large number of basis functions are used, the drawback of the approach begins to outweigh the advantage which leads to inefficient simulation. For small number of basis functions the MPDE approach is highly efficient on the presented example of the buck converter.

\section*{Acknowledgements}
This work is supported  by the `Excellence Initiative' of German Federal and State Governments and the Graduate School CE at TU Darmstadt. The authors thank Herbert Egger from the department of mathematics and Felix Wolf from the Graduate School of Computational Engineering at TU Darmstadt for fruitful discussions.

\FloatBarrier
\bibliography{full,english,library,newitems}
\bibliographystyle{ieeetr}
\newpage
\section*{Appendix}
\begin{theorem}
  \label{the:PWMSymmetry}
  The symmetry of the PWM basis functions defined by \eqref{equ:PWMbasis_p1}-\eqref{equ:PWMbasis_normalizing} with duty cycle $D=0.5$ is given by
  \begin{equation*}
    - p_{k}(\tau) = p_{k}(\tau+0.5), \quad \forall \, k=1,\dots,\Np, \text{ and }\, \forall \, \tau \in (0,0.5),
  \end{equation*}
\end{theorem}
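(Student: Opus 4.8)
\emph{Proof idea.} The plan is to argue by induction on $k$, carrying along the single structural fact that, when $D=\tfrac12$, every basis function $p_{k}$ with $k\ge1$ is \emph{odd} under the half-period shift $S\colon\tau\mapsto\tau+\tfrac12$ (with arguments taken modulo $1$), while $p_{0}\equiv1$ is \emph{even}. First I would extend all basis functions $1$-periodically and introduce the subspaces $V_{+}$ and $V_{-}$ of $L^{2}(0,1)$ consisting of those $f$ with $f(\tau+\tfrac12)=f(\tau)$, respectively $f(\tau+\tfrac12)=-f(\tau)$, for almost every $\tau$. A one-line change of variables — writing $\int_{0}^{1}fg\,\dd\tau$ as $\int_{0}^{1/2}fg\,\dd\tau+\int_{1/2}^{1}fg\,\dd\tau$ and substituting $\tau\mapsto\tau+\tfrac12$ in the second term — shows $V_{+}\perp V_{-}$ in $L^{2}(0,1)$; in particular, since $1\in V_{+}$, every element of $V_{-}$ has zero mean.

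The base case is immediate: putting $D=0.5$ in \eqref{equ:PWMbasis_p1} and substituting gives $p_{1}(\tau+\tfrac12)=-p_{1}(\tau)$, i.e.\ $p_{1}\in V_{-}$.

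For the inductive step, assume $p_{0}\in V_{+}$ and $p_{1},\dots,p_{k-1}\in V_{-}$. The decisive computation concerns the antiderivative $p_{k}^{\star}(\tau)=\int_{1/2}^{\tau}p_{k-1}(\tau')\,\dd\tau'$: substituting $\tau'\mapsto\tau'+\tfrac12$ and invoking the hypothesis on $p_{k-1}$ yields $p_{k}^{\star}(\tau+\tfrac12)+p_{k}^{\star}(\tau)=-c_{k-1}$, where $c_{k-1}:=\int_{0}^{1/2}p_{k-1}(\tau')\,\dd\tau'$ is a constant which in general does \emph{not} vanish. Equivalently, $p_{k}^{\star}=\phi_{k}-\tfrac12 c_{k-1}$, with $\phi_{k}:=p_{k}^{\star}+\tfrac12 c_{k-1}\in V_{-}$ and the constant $-\tfrac12 c_{k-1}$ lying in $\mathrm{span}(p_{0})\subseteq V_{+}$. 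Now the Gram-Schmidt step subtracts from $p_{k}^{\star}$ its orthogonal projection onto $U:=\mathrm{span}(p_{0},\dots,p_{k-1})$ (the $p_{l}$ are orthonormal by construction); this space splits as the orthogonal sum $\mathrm{span}(p_{0})\oplus\mathrm{span}(p_{1},\dots,p_{k-1})$, the first summand in $V_{+}$ and the second in $V_{-}$ by the inductive hypothesis. Projecting $p_{k}^{\star}=\phi_{k}-\tfrac12 c_{k-1}$: the constant $-\tfrac12 c_{k-1}$ already lies in $U$ and is returned unchanged, while $\phi_{k}\in V_{-}$ has zero projection onto $\mathrm{span}(p_{0})$ (orthogonality of $V_{+}$ and $V_{-}$) and projection onto $\mathrm{span}(p_{1},\dots,p_{k-1})$ again in $V_{-}$. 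Hence $\overline{p}_{k}=p_{k}^{\star}-\mathrm{proj}_{U}(p_{k}^{\star})=\phi_{k}-(\text{a vector in }V_{-})\in V_{-}$, and the normalization \eqref{equ:PWMbasis_normalizing} — multiplication by a positive scalar — gives $p_{k}\in V_{-}$, closing the induction.

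The only delicate steps are bookkeeping ones. One is making the substitution $\tau\mapsto\tau+\tfrac12$ precise on the periodically extended functions; continuity of each $p_{k}^{\star}$ (it is the integral of a bounded function) lets one pass freely between the almost-everywhere and pointwise statements, and the open interval $(0,\tfrac12)$ in the theorem merely sidesteps the kink of $p_{1}$. The other, which is really the crux, is noticing that the constant of integration $-\tfrac12 c_{k-1}$ produced at each recursion level is exactly the component annihilated by the projection onto $p_{0}\equiv1$ — so one never needs to show separately that $c_{k-1}=0$ (and indeed it usually is not). I expect this cancellation to be the main conceptual obstacle; everything else is routine.
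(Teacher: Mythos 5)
Your argument is correct, and it takes a genuinely different route from the paper's proof. The paper works piecewise on the two half-intervals (the ``a''/``b'' pieces), runs a two-step induction that distinguishes even- and odd-indexed functions, and carries the stronger reflection symmetries $p_{i,\rma}(\tau)=\pm p_{i,\rma}(0.5-\tau)$ of \eqref{equ:hypo_sym_evenIndex}--\eqref{equ:hypo_sym_oddIndex}, evaluating the orthogonalization integral explicitly in each parity case (it vanishes for one parity and equals $-0.5\int_0^{0.5}p_{i,\rma}$ for the other); it also simplifies the recursion to orthogonalization against $p_0$ only, justifying this by a span argument rather than by the original Gram--Schmidt definition. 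You instead carry a single invariant --- every $p_k$, $k\ge 1$, lies in the shift-odd subspace $V_-$ while $p_0\in V_+$ --- prove $V_+\perp V_-$ in $\Ltwo(0,1)$ by one change of variables, and observe that the integration constant $-\tfrac12 c_{k-1}$ produced by $p_k^\star(\tau)=\int_{0.5}^\tau p_{k-1}$ is precisely the $V_+$-component killed by the projection onto $p_0$, so the full Gram--Schmidt step of \eqref{equ:PWMbasis_normalizing} maps $p_k^\star$ back into $V_-$. This buys a shorter, parity-free induction that works directly with the theorem's actual definition (full orthonormalization) instead of the paper's modified recursion, and it cleanly isolates the one nontrivial cancellation. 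What it does not deliver are the finer intra-half-period reflection symmetries \eqref{equ:hypo_sym_evenIndex}--\eqref{equ:hypo_sym_oddIndex}, which the paper's version establishes along the way and then reuses in the subsequent remark on the $\Ltwo$-projection of the 3-level excitation; if you wanted those as well, you would need to augment your invariant (e.g.\ track evenness/oddness about $\tau=0.25$ on $(0,0.5)$ alternating with $k$), at which point the bookkeeping approaches that of the paper. Two minor points to make explicit in a polished write-up: $\overline p_k\neq 0$ so the normalization is licit, and the orthonormality of $p_0,\dots,p_{k-1}$ (needed for your orthogonal splitting of $U$) is exactly what the Gram--Schmidt construction guarantees.
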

\begin{proof}
  The basis functions $p_i(\tau) \; \forall \,i \in \mathbb{N}$ with duty cycle $D=0.5$ are defined as follows:
  The zeroth and first basis function are given piecewisely as
  \begin{equation}
    p_0(\tau)= 1 \; \forall \, \tau \in [0,1],
  \end{equation}
  and
  \begin{align}
    p_1(\tau)= \left\{\begin{array}{cc}
      p_{1,\mathrm{a}}(\tau)=\sqrt{3}(4 \tau - 1), & \forall \,\tau \in [0,0.5)\\
      p_{1,\mathrm{b}}(\tau)=\sqrt{3}(-4 \tau + 3), & \forall \,\tau \in [0.5,1]
    \end{array}\right.,
    \label{equ:PWMBF_1}
  \end{align}
  which essentially corresponds to a scaled and translated hat function.
  The subscript letter refers to the interval in which the basis function is defined, i.e., if ``$\mathrm{a}$'', the polynomial in $\tau\in[0,0.5)$ is considered, if ``$\mathrm{b}$'' the polynomial in $\tau\in[0.5,1]$ is considered. If the subscript comprises only a number, the entire basis function is addressed. 
  
  The symmetry of the basis function $p_1(\tau)$ can be expressed as follows
  \begin{equation}
    \begin{aligned}
      - &p_{1,\mathrm{a}}(\tau) = p_{1,\mathrm{b}}(\tau+0.5) &\quad \, \forall \, \tau\in (0, 0.5) \\
      - &p_{1,\mathrm{a}}(\tau) = p_{1,\mathrm{a}}(0.5-\tau) &\quad \, \forall \, \tau\in (0, 0.5) \\
      &p_{1,\mathrm{a}}(\tau) = p_{1,\mathrm{b}}(1-\tau) &\quad \, \forall \, \tau \in (0, 0.5)
    \end{aligned}\,.
    \label{equ:PWMBF_1_sym}
  \end{equation}
  
  The basis functions of higher order, i.e., $i=2,3,4,\dots$ are calculated by integrating the basis functions of lower order 
  \begin{align}
    p_{i,\mathrm{a}}^\star(\tau) = \int_{0.5}^{\tau} p_{i-1,\mathrm{a}}(\eta) \, \dd \eta \, , \\
    p_{i,\mathrm{b}}^\star(\tau) = \int_{0.5}^{\tau} p_{i-1,\mathrm{b}}(\eta) \, \dd \eta \, , 
  \end{align}
  and orthogonalizing the integrated basis functions against the constant basis function $p_0(\tau)$. The basis function $p_{i,\mathrm{a}}$ therefore is
  \begin{equation}
    \begin{aligned}
      p_{i,\mathrm{a}}(\tau) &= p_{i,\mathrm{a}}^\star(\tau) - \underbrace{\frac{p_0(\tau)}{\sqrt{\int_0^1 p_0(\eta) p_0(\eta) \dd \eta}}}_{=1} \int_0^1 p_{i,\mathrm{a}}^\star(\eta) \underbrace{p_0(\eta)}_{=1} \dd \eta  \\ 
      &= p_{i,\mathrm{a}}^\star(\tau) - \int_0^1 p_{i,\mathrm{a}}^\star(\eta) \dd \eta  \\ 
      &= \int_{0.5}^{\tau} p_{i-1,\mathrm{a}}(\eta) \, \dd \eta - \int_0^1 \int_{0.5}^{\tau} p_{i-1}(\eta) \,\dd \eta \,\dd \tau\\
      &= \int_{0.5}^{\tau} p_{i-1,\mathrm{a}}(\eta) \,\dd \eta - \int_0^{0.5} \int_{0.5}^{\tau} p_{i-1,\mathrm{a}}(\eta) \,\dd \eta \,\dd \tau \\
      &- \int_{0.5}^1 \int_{0.5}^{\tau} p_{i-1,\mathrm{b}}(\eta) \,\dd \eta \,\dd \tau.
    \end{aligned}
    \label{equ:calcPWMBFIntAndOrth}
  \end{equation}
  Similarly $p_{i,\mathrm{b}}$ is given as
  \begin{equation}
    \begin{aligned}
      p_{i,\mathrm{b}}(\tau) &= \int_{0.5}^{\tau} p_{i-1,\mathrm{b}}(\eta) \,\dd \eta - \int_0^{0.5} \int_{0.5}^{\tau} p_{i-1,\mathrm{a}}(\eta) \,\dd \eta \,\dd \tau \\
      &- \int_{0.5}^1 \int_{0.5}^{\tau} p_{i-1,\mathrm{b}}(\eta) \,\dd \eta \,\dd \tau.
    \end{aligned}
    \label{equ:calcPWMBFIntAndOrth2}
  \end{equation}
  Note that orthonormalization of all basis functions against each other is possible and has been originally proposed in \cite{Gyselinck_2013ab}. However, it spans the same space as the basis functions without full orthonormalization \eqref{equ:calcPWMBFIntAndOrth},\eqref{equ:calcPWMBFIntAndOrth2}, and is therefore neglected.
  
  \subsection{Symmetry properties of PWM basis functions}
  The symmetry properties of the basis functions as defined by \eqref{equ:PWMBF_1}-\eqref{equ:calcPWMBFIntAndOrth2} with duty cycle $D=0.5$ are examined in the following.
  
  \subsubsection{Induction hypothesis}
  The symmetry of the basis functions is given by
  \begin{equation}
    \left.\begin{array}{c}
      - p_{i,\mathrm{a}}(\tau) = p_{i,\mathrm{b}}(\tau+0.5) \\
      p_{i,\mathrm{a}}(\tau) = p_{i,\mathrm{a}}(0.5-\tau) \\
      -p_{i,\mathrm{a}}(\tau) = p_{i,\mathrm{b}}(1-\tau) 
    \end{array}\right\}
    \parbox{10em}{$\forall \, i = 2 k, k\in\mathbb{N} \setminus \{0\} \\\text{ and }\, \forall \, \tau \in (0,0.5)$},
    \label{equ:hypo_sym_evenIndex}
  \end{equation}
  i.e., for all basis functions with even index, and 
  \begin{equation}
    \left.\begin{array}{c}
      - p_{i,\mathrm{a}}(\tau) = p_{i,\mathrm{b}}(\tau+0.5) \\
      - p_{i,\mathrm{a}}(\tau) = p_{i,\mathrm{a}}(0.5-\tau) \\
      p_{i,\mathrm{a}}(\tau) = p_{i,\mathrm{b}}(1-\tau) 
    \end{array}\right\}
    \parbox{10em}{$\forall \, i = 1+2 k, k\in\mathbb{N} \setminus \{0\}\\\text{ and }\, \forall \, \tau \in (0,0.5)$},
    \label{equ:hypo_sym_oddIndex}
  \end{equation}
  i.e., for all basis functions with odd index.
  
  \subsubsection{Induction base}
  We calculate the basis functions $p_2(\tau)$ and $p_3(\tau)$ and their symmetry properties. 
  The basis function $p_2(\tau)$ is obtained using \eqref{equ:calcPWMBFIntAndOrth},\eqref{equ:calcPWMBFIntAndOrth2} and given by
  \begin{equation}
    p_{2,\mathrm{a}}(\tau)=\sqrt{3}\,\left(\,2 \tau^2-\tau\,\right),
  \end{equation}
  and
  \begin{equation}
    p_{2,\mathrm{b}}(\tau)=\sqrt{3}\,\left(\,2\tau^2+3\tau-1\,\right).
  \end{equation}
  They fulfill the symmetry properties stated in \eqref{equ:hypo_sym_evenIndex}.
  
  The basis function $p_3(\tau)$ is also obtained using \eqref{equ:calcPWMBFIntAndOrth}, \eqref{equ:calcPWMBFIntAndOrth2} and given by
  \begin{equation}
    p_{3,\mathrm{a}}(\tau)=\sqrt{3}\,\left(\,\frac{2}{3}\tau^3 - 0.5 \tau^2 + \frac{1}{48}\,\right),
  \end{equation}
  and
  \begin{equation}
    p_{3,\mathrm{b}}(\tau)=\sqrt{3}\,\left(\,-\frac{2}{3}\tau^3 + \frac{3}{2}\tau^2 - \tau + \frac{3}{16}\,\right).
  \end{equation}
  They fulfill the symmetry properties stated in \eqref{equ:hypo_sym_oddIndex}.
  
  \subsubsection{Induction step}
  We calculate the basis functions $p_i(\tau)$ and $p_{i+1}(\tau)$, where $i=2k, k\in\mathbb{N} \setminus \{0,1\}$ and their symmetry properties. 
  \paragraph{Basis function $p_i(\tau)$.}
  The basis function $p_i(\tau)$ is given by integration and orthogonalization against the constant basis function $p_0(\tau)$, i.e., 
  \begin{equation}
    p_{i,\mathrm{a}}(\tau) = \int_{0.5}^{\tau} p_{i-1,\mathrm{a}}(\eta) \,\dd \eta - \int_0^1 \int_{0.5}^{\tau} p_{i-1}(\eta) \,\dd \eta \,\dd \tau
    \label{equ:PWMBF_ia}
  \end{equation}
  and
  \begin{equation}
    p_{i,\mathrm{b}}(\tau) = \int_{0.5}^{\tau} p_{i-1,\mathrm{b}}(\eta) \,\dd \eta - \int_0^1 \int_{0.5}^{\tau} p_{i-1}(\eta) \,\dd \eta \,\dd \tau
    \label{equ:PWMBF_ib}
  \end{equation}
  The orthogonalization term yields using the symmetry properties \eqref{equ:hypo_sym_oddIndex} and substitution
  \begin{equation}
    \begin{aligned}
      \int_0^1 \int_{0.5}^{\tau} p_{i-1}(\eta) \, \dd \eta \, \dd \tau &= \int_0^{0.5} \int_{0.5}^{\tau} p_{i-1,\mathrm{a}}(\eta) \, \dd \eta \, \dd \tau \\
      &+ \int_{0.5}^{0} \int_{0.5}^{\tau} p_{i-1,\mathrm{a}}(\eta) \, \dd \eta \, \dd \tau = 0.
    \end{aligned}
  \end{equation}
  Therefore the expressions for the basis function $p_i(\tau)$ \eqref{equ:PWMBF_ia} and \eqref{equ:PWMBF_ib} simplify to
  \begin{equation}
    p_{i,\mathrm{a}}(\tau) = \int_{0.5}^{\tau} p_{i-1,\mathrm{a}}(\eta) \, \dd \eta
  \end{equation}
  and
  \begin{equation}
    p_{i,\mathrm{b}}(\tau) = \int_{0.5}^{\tau} p_{i-1,\mathrm{b}}(\eta) \, \dd \eta.
  \end{equation}
  The symmetries are given by, using \eqref{equ:hypo_sym_oddIndex},
  \begin{equation}
    \begin{aligned}
      -p_{i,\mathrm{a}}(\tau)&=p_{i,\mathrm{b}}(\tau+0.5),
    \end{aligned}
    \label{equ:PWMBF_i_sym1}
  \end{equation}
  and
  \begin{equation}
    \begin{aligned}
      p_{i,\mathrm{a}}(\tau)&= p_{i,\mathrm{a}}(0.5-\tau),
    \end{aligned}
    \label{equ:PWMBF_i_sym2}
  \end{equation}
  and thus fulfill the hypothesis \eqref{equ:hypo_sym_evenIndex}.
  
  \paragraph{Basis function $p_{i+1}(\tau)$.}
  The basis function $p_{i+1}(\tau)$ is calculated by
  \begin{equation}
    p_{i+1,\mathrm{a}}(\tau) = \int_{0.5}^{\tau} p_{i,\mathrm{a}}(\eta) \, \dd \eta - \int_0^1 \int_{0.5}^{\tau} p_{i}(\eta) \, \dd \eta \, \dd \tau,
    \label{equ:PWMBF_i+1a}
  \end{equation}
  and
  \begin{equation}
    p_{i+1,\mathrm{b}}(\tau) = \int_{0.5}^{\tau} p_{i,\mathrm{b}}(\eta) \, \dd \eta - \int_0^1 \int_{0.5}^{\tau} p_{i}(\eta) \, \dd \eta \, \dd \tau.
    \label{equ:PWMBF_i+1b}
  \end{equation}
  The orthogonalization term is calculated using the symmetry properties \eqref{equ:hypo_sym_evenIndex} and substitution
  \begin{align*}
    \int_0^1 \int_{0.5}^{\tau} p_{i}(\eta) \, \dd \eta \, \dd \tau &=  - 0.5 \int_{0}^{0.5} p_{i,\mathrm{a}}(\eta) \,\dd \eta.
  \end{align*}
  Therefore the expressions for the basis function $p_{i+1}(\tau)$ \eqref{equ:PWMBF_i+1a} and \eqref{equ:PWMBF_i+1b} are
  \begin{equation}
    p_{i+1,\mathrm{a}}(\tau) = \int_{0.5}^{\tau} p_{i,\mathrm{a}}(\eta) \,\dd \eta + 0.5 \int_{0}^{0.5} p_{i,\mathrm{a}}(\eta) \,\dd \eta,
  \end{equation}
  and
  \begin{equation}
    p_{i+1,\mathrm{b}}(\tau) = \int_{0.5}^{\tau} p_{i,\mathrm{b}}(\eta) \,\dd \eta + 0.5 \int_{0}^{0.5} p_{i,\mathrm{a}}(\eta) \,\dd \eta.
  \end{equation}
  The symmetries are given by, using \eqref{equ:hypo_sym_evenIndex},
  \begin{equation}
    \begin{aligned}
      -p_{i+1,\mathrm{a}}(\tau)&=p_{i+1,\mathrm{b}}(\tau+0.5),
    \end{aligned}
    \label{equ:PWMBF_i+1_sym1}
  \end{equation}
  and
  \begin{equation}
    \begin{aligned}
      -p_{i+1,\mathrm{a}}(\tau)&=p_{i+1,\mathrm{a}}(0.5-\tau),
    \end{aligned}
    \label{equ:PWMBF_i+1_sym2}
  \end{equation}
  and thus fulfill the hypothesis \eqref{equ:hypo_sym_oddIndex}.
\end{proof}

\begin{remark}
  The PWM basis functions are suited to approximate the solution of linear ODEs with 2-level pulsed excitation. The solution of these ODEs are given by piecewise exponential functions which fulfill the symmetry condition stated in Theorem \ref{the:PWMSymmetry}.
\end{remark}
We calculate the solution of the linear ODE
\begin{equation}
  A \ddt x(t) + B x(t) = c(t)
\end{equation}
where $A, B \in \mathbb{R}$ are constants, $x(t)\in\mathbb{R}$ is the solution and $c(t)\in\mathbb{R}$ is the excitation. Generalization to systems of ODEs is straightforward. Without loss of generality we assume $A=1$. 
Rewriting leads to
\begin{equation}
  \ddt x(t) = c(t) - B x(t).
  \label{equ:buckODE}
\end{equation}
Therefore the homogeneous problem is given as 
\begin{equation}
  \ddt x(t) = - B x(t).
\end{equation}
The solution of the ODE \eqref{equ:buckODE} is given by
\begin{equation}
  x(t)=\alpha\e^{- B t} + x_p(t),
\end{equation}
where $\alpha\in\mathbb{R}$ is a constant and $x_p(t)$ is a particular solution.
In the following the time interval of one period of a 2-level pulsed excitation $c(t)$ with duty cycle $D=0.5$ is considered. The excitation is given by
\begin{equation}
  c(t)=\left\{
  \begin{array}{ll}
    1 &\text{for } 0\leq t < 0.5\,\Ts \\
    -1 &\text{for } 0.5\,\Ts \leq t \leq \Ts
  \end{array}\right..
\end{equation}
Two cases are distinguished. Either $0\leq t < 0.5\,\Ts$ or $0.5\,\Ts \leq t \leq \Ts$. In the first case, the solution and constants are denoted with additional subscript ``a'', in the second case with additional subscript ``b''.

The solution for the first interval is then given by 
\begin{equation}
  \begin{aligned}
    x_\rma(t)=\alpha_\rma \e^{-B t} + B^{-1},
  \end{aligned}
\end{equation}
where the last term is a particular solution if the excitation is constantly $1$.

The solution for the second interval is given by 
\begin{equation}
  \begin{aligned}
    x_\rmb(t)=\alpha_\rmb \e^{-B t} - B^{-1},
  \end{aligned}
\end{equation}
where the last term is a particular solution if the excitation is constantly $-1$.

The following conditions require to be satisfied for each ripple of the solution:
\begin{align}
  x_\rma(0)&=x_\rmb(\Ts) \\
  x_\rma(0.5\,\Ts)&=x_\rmb(0.5\,\Ts).
\end{align}

Inserting these conditions into the solutions gives the two equations
\begin{align}
  \alpha_\rma \e^{- B 0} + B^{-1} &= \alpha_\rmb \e^{- B \Ts} - B^{-1} \\
  \alpha_\rma \e^{- B 0.5 \Ts} + B^{-1} &= \alpha_\rmb \e^{- B 0.5 \Ts} - B^{-1}
\end{align}
Substracting the second from the first equation leads to the relation
\begin{equation}
  \alpha_\rma = -\alpha_\rmb \e^{- B 0.5 \Ts}
\end{equation}
The symmetry of the solution is, using the relation between the coefficients, given by 
\begin{equation}
  \begin{aligned}
    x_\rma(t-0.5 \Ts)&=\alpha_\rma \e^{- B t} \e^{ B 0.5 \Ts} + B^{-1} \\
    &=-\alpha_\rmb \e^{- B 0.5 \Ts} \e^{- B t} \e^{ B 0.5 \Ts} + B^{-1} \\
    &=-\alpha_\rmb \e^{- B t} + B^{-1} \\
    &=-x_\rmb(t).
  \end{aligned}
  \label{equ:symmetryODESolution}
\end{equation}
The PWM basis functions are polynomials of degree up to $\Np$, which span the polynomial space of dimension $\Np + 1$. Thus, a linear combination of them with duty cycle $D=0.5$ can exactly represent any piecewise polynomial with $C^0$ continuity at $\tau=0.5$, maximum degree $\Np$ and symmetry condition common to odd and even indexed PWM basis functions \eqref{equ:hypo_sym_evenIndex}, \eqref{equ:hypo_sym_oddIndex}, i.e., $- p_{i}(\tau) = p_{i}(\tau+0.5)$. The solution of the linear ODE fulfills this condition, see \eqref{equ:symmetryODESolution}.

\begin{remark}
  The PWM basis functions may not be suited to represent the solution of linear or nonlinear ODEs with arbitrary excitations. A counter example is an ODE with a 3-level pulsed excitation (see Fig. \ref{fig:excitationProjection}), for which it can be shown that the approximation fails.
\end{remark}

\begin{figure}
  \centering
  \includegraphics{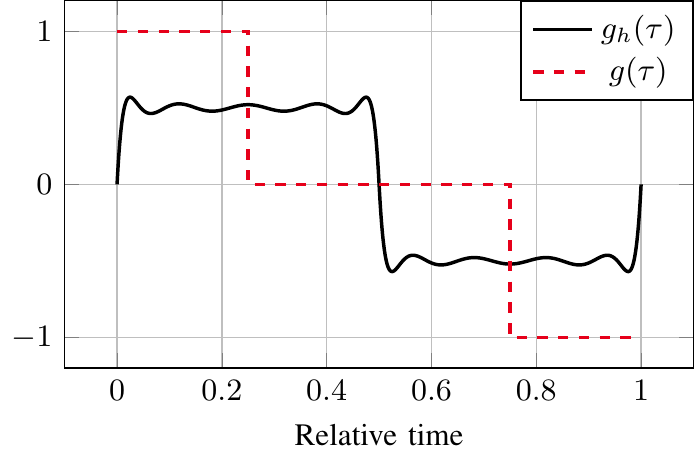}
  \caption{3-level pulsed excitation $g(\tau)$ and its projection $g_h(\tau)$ onto the space spanned by $p_k(\tau)$ with $\Np=10$.}
  \label{fig:excitationProjection}
\end{figure}

The PWM basis functions above are built for a particular duty cycle $D$ to represent piecewise exponential solutions as generated in power converters by 2-level pulsed excitations. Let us show that they do not span $\Ltwo([0,1])$. Consider as example the 3-level function
\begin{equation}
  g(\tau) \ = \ \left\{  
  \begin{array}{ll} 
    1 \quad & \text{for} \ \ \ 0 \leq \tau \leq 0.25 \\ 
    0 \quad & \text{for} \ \ \ 0.25 \leq \tau \leq 0.75 \\
    -1 \quad & \text{for} \ \ \ 0.75 \leq \tau \leq 1
  \end{array}
  \right. .
  \label{equ:functiong}
\end{equation}
It is depicted in Fig.~\ref{fig:excitationProjection}. 

$g(\tau)$ is $\Ltwo$-projected onto the space spanned by the basis functions $p_k(\tau) \, \forall k \in \mathbb{N}$. The projection $g_h(\tau)$ (see Fig.~\ref{fig:excitationProjection}) is a linear combination of the basis functions 
\begin{equation}
  g_h(\tau)=a_0 \, p_0(\tau) + a_1 \, p_1(\tau) + \dots + a_{\Np} \, p_{\Np}(\tau),
\end{equation}
where $\Np$ is the number of employed basis functions. 
The zeroth basis function and all basis functions with odd index do not contribute to $g_h(\tau)$ as
\begin{equation}
  \int_{0}^{1} p_k(\tau) \, g(\tau) \, \dd \tau = 0,\quad k=0,1,3,5,\dots.
\end{equation}
Therefore the final solution exhibits the same symmetry properties as the basis functions with even index. These are given by \eqref{equ:hypo_sym_evenIndex}. We assume without loss of generality, that $g_h(\tau)$ is given in terms of orthonormalized basis functions. 
As they span the same space, the symmetry properties of $g_h(\tau)$ do not change. The error between $g_h(\tau)$ and $g(\tau)$ in the $\Ltwo$ sense can be estimated as follows, where, for simplicity, the $\tau$ dependency is omitted 
\begin{align}
  \int_{0}^{1} (g_h-g)^2 \, \dd \tau 
  \overset{\eqref{equ:hypo_sym_evenIndex}}{=}& 2 \int_{0}^{0.5} (g_h-g)^2 \, \dd \tau \\
  = & 2 \int_{0}^{0.5} g^2-2 g g_h + g_h^2 \, \dd \tau \\
  = & 2 \int_{0}^{0.25} g^2-2 g g_h + g_h^2 \, \dd \tau \\
  & + 2 \int_{0.25}^{0.5} g^2-2 g g_h + g_h^2 \, \dd \tau.\\
  \intertext{Using $g(\tau)=0 \,\, \forall \tau \in [0.25,0.5]$ yields}
  \int_{0}^{1} (g_h-g)^2 \, \dd \tau= & 2 \int_{0}^{0.25} g^2-2 g g_h + g_h^2 \, \dd \tau + 2 \int_{0.25}^{0.5} g_h^2 \, \dd \tau \\
  \overset{\eqref{equ:hypo_sym_evenIndex}}{=} & 2 \int_{0}^{0.25} g^2-2 g g_h + g_h^2 + g_h^2 \, \dd \tau \\
  = &2 \norm{g-g_h}^2_{\Ltwo([0,0.25])}\\
  &+ 2 \norm{g_h}^2_{\Ltwo([0,0.25])}.
\end{align}
Now the expression $\norm{g_h}_{\Ltwo([0,0.25])}$ is estimated using the orthonormality of the basis functions. The $\Ltwo$ scalar product is denoted as $\langle a(\tau),b(\tau) \rangle = \int_{0}^{0.25} a(\tau) \, b(\tau) \, \dd \tau$, where in the following we leave out the $\tau$ dependency for simplicity
\begin{align}
  \norm{g_h}^2_{\Ltwo([0,0.25])} &= \langle \sum_k \langle g,p_k \rangle p_k, \sum_l \langle g,p_l \rangle p_l \rangle \\
  &= \sum_k \sum_l \langle \langle g,p_k \rangle p_k, \langle g,p_l \rangle p_l \rangle \\
  \intertext{Using the orthonormality of the basis yields}
  \norm{g_h}^2_{\Ltwo([0,0.25])} &=\sum_k \langle \langle g,p_k \rangle p_k, \langle g,p_k \rangle p_k \rangle \\
  &=\sum_k \langle \langle g,p_k \rangle, \langle g,p_k \rangle \rangle \\
  &=\sum_k \langle g,p_k \rangle^2 \\
\end{align}
As $\langle g,p_k \rangle^2$ is always positive independent of how many basis functions are used, the error $\norm{g_h-g}^2_{\Ltwo([0,1])}$ will always be greater than a fixed constant. 

\end{document}